\newtheorem{dfn}{Definition}
\newtheorem{example}[dfn]{Example} 
\theoremstyle{nonumberplain}
\newtheorem{proof}{Proof}
\newtheorem{thm}[dfn]{Theorem} 
\newtheorem{theorem}[dfn]{Theorem} 
\newtheorem{proposition}[dfn]{Proposition}
\newtheorem{lemma}[dfn]{Lemma}
\newtheorem{cor}[dfn]{Corollary}
\newtheorem{obs}[dfn]{Observation}
\def\scaled{\let\onleft=\left\let\onright=\right}
\def\unscale{\let\onleft=\relax\let\onright=\relax}
\newcommand{\bbone}{\ensuremath{\mathbbm 1}}
\newcommand{\defeq}{\ensuremath{:=}}
\newcommand{\setdef}{\ensuremath{\;\vert\;}}
\newcommand{\ocmpl}{\ensuremath{^{\mathrm{\perp}}}}
\def\scaled{\let\onleft=\left\let\onright=\right}
\def\unscale{\let\onleft=\relax\let\onright=\relax}
\newcommand{\set}[1]{\ensuremath{\onleft\{ #1\onright\}}\unscale}
\newcommand{\reals}{\ensuremath{\mathbb R}}
\newcommand{\borel}{\ensuremath{\mathrm{B}}}
\newcommand{\op}{\oplus}
\newcommand{\notperp}{\mathbin{\perp\kern-.9em/}}
\newcommand{\compat}{\mathbin{\leftrightarrow}}
\newcommand{\notsim}{\mathbin{\sim\kern-.9em/}}
\def\quotient#1#2{%
    \raise1ex\hbox{$#1$}\Big/\lower1ex\hbox{$#2$}%
}
\newcommand{\true}{\ensuremath{\texttt{true}}}
\newcommand{\false}{\ensuremath{\texttt{false}}}
\begin{document}

\title{Non-signalling theories and generalized probability}
\author{Tomasz I. Tylec}
\email{tylec@cft.edu.pl}
\author{Marek Ku\'s}
\affiliation{Center for Theoretical Physics, 
    Polish Academy of Sciences, 
    Aleja Lotnik\'ow 32/46, 
    02-668 Warsaw, Poland}
\author{Jacek Krajczok}
\affiliation{Faculty of Physics,
   University of Warsaw,
   Pasteura 5,
   02-093 Warsaw, Poland}

\begin{abstract}
    We provide mathematicaly rigorous justification
    of using term \emph{probability} in connection to
    the so called non-signalling theories, 
    known also as Popescu's and Rohrlich's box worlds.
    No only do we prove correctness of these models 
    (in the sense that they describe composite system
    of two independent subsystems)
    but we obtain new properties of non-signalling boxes
    and expose new tools for further investigation.
    Moreover, it allows strightforward generalization
    to more complicated systems.
\end{abstract}

\maketitle

\section{Introduction}

The idea of so-called \emph{non-signalling} theories,
or \emph{box worlds},
that originates from the paper of Popescu and Rohrlich \cite{popescu1994quantum},
became a popular tool in certain areas related to quantum information theory,
like proving security of ciphering protocols \cite{PhysRevLett.97.120405}
or finding bounds on communication complexity \cite{PhysRevLett.96.250401}
(to name only a few references out of many).
By a \emph{box world} one usually mean a physical model
\cite{barrett2007information}  
build from two ``black boxes'', 
where each box is characterized by a finite set of input values
(that correspond to selection of an ``observable'')
and, for each input, finite set of output values.
To fix notation, let us denote particular box world
by $(\mathcal U, \mathcal V)$, where
$\mathcal U = \set{\mathcal U_1, \dots, \mathcal U_N}$,
$\mathcal V = \set{\mathcal V_1, \dots, \mathcal V_M}$ and
$\mathcal U_i,\mathcal V_j$ are sets of outcomes
for $i$-th input on left box and $j$-th input on right box,
respectively.

A state of a box world is described with the help 
of the following thought experiment:
we supply boxes with a large stream of input values 
and write down obtained outputs.
Then we compute frequency $P(\alpha\beta| ab)$ of getting outputs
$\alpha, \beta$ given the inputs $a, b$ on the left and right box respectively
\cite{barrett2007information}.

Obviously, $P$ must be a non-negative function and
\begin{equation*}
    \sum_{\alpha\in\mathcal U_a, \beta\in\mathcal V_b} P(\alpha \beta | ab) = 1,
    \qquad a=1,\dots,N, b=1,\dots M.
\end{equation*}
Moreover, we want boxes to be in some sense independent,
thus we impose additional restrictions on $P$, 
namely the \emph{non-signalling} conditions
\begin{align}
    \sum_{\alpha'\in\mathcal U_a} P(\alpha' \beta| a b) &= 
    \sum_{\alpha'\in\mathcal U_c} P(\alpha' \beta| c b) \nonumber\\
    \sum_{\beta'\in\mathcal V_b} P(\alpha \beta'| a b) &= 
    \sum_{\beta'\in\mathcal V_c} P(\alpha \beta'| a c)
    \label{eq:nonsignal}                                                     
\end{align}
satisfied for any $a, b, c, \alpha, \beta$. 
These express the intuitive idea that the output of one of the boxes
does not depend on the input of the other. 
It also reflects Einstein's causality principle in the model,
as boxes could be placed in space-like separated regions of space-time.
It is assumed that any $P$ that satisfies all these properties
(non-negativity, normalization and non-signalling) is an admissible state
for a given box world.
In the sequel we will call such $P$ a \emph{PR-state} of the box world.

The problem with the above widely used formulation lies in the fact, 
that $P$ is interpreted as a \emph{probability} of getting output
$\alpha\beta$ given input $ab$.
This interpretation is based on the thought experiment we discussed previously.
We argue that applying frequentist definition of probability 
to thought experiments is unjustified. 
The number of ``paradoxes'' related to probability,
like Bertrand's paradox, Loschmidt's paradox or Monty Hall problem
to name only a few, shows that we should not rely solely on intuition
when we talk about probability.
On the other, hand without sound probabilistic interpretation of $P$
the meaning of non-signalling conditions is questionable
and so the whole idea of non-signalling theories.

Lacking of any physical realization of box worlds,
we are forced to justify probabilistic interpretation of $P$
on the basis of axiomatic approach to probability.
Because box worlds obey neither classical nor quantum probability rules
(cf.\ \cite{barrett2005nonlocal}) it is clear that we need framework
that generalizes standard Kolmogorov's axioms of probability.
Out of the two widely used such frameworks:
operator algebra approach (that focused on the algebra of random variables)
and quantum logic approach (that focused on a partialy ordered set of random events),
only the latter is capable of ``super-quantum'' generalization.

The paper is organized in the following way.
We begin with a very brief introduction to quantum logics
(detailed exposition can be found e.g.\ in \cite{pulmanova2007}).
Then we construct the logic of an arbitrary box world $(\mathcal U, \mathcal V)$.
We prove that in the framework of quantum logic
it indeed can describe system composed of two independent subsystems
and discuss some of its properties.
We end the paper with some remarks about generalization
of box worlds to larger number of boxes.

Finally we would like to mention that
there is another mathematically rigorous attempt to formalize box world theories
called General Probability Theory 
or Generic Probability Theory
(GPT in short, see \cite{barnum2007generalized} and references therein,
although it seems that the idea appeared for the first time
in 1974 in the Mielnik's paper \cite{mielnik1974generalized}).
The basic notion in the GPT is an arbitrary convex set of states.
GPT is more general than the quantum logic framework that we use in this paper.
On the other hand, the latter, due to its more restrictive nature,
provides more tools to study features of the box world theories.
Thus, our work can be considered as complementary
to the convex set framework.

\section{Quantum logics as a framework for generalized probability}

Quantum logic approach originates from 
the seminal paper of Birkhoff and von Neumann \cite{birkhoff1936logic}.
A detailed physical introduction and justification of the whole programme
can be found in the book of Piron \cite{piron1976foundations},
where the Hilbert space formulation of quantum mechanics is derived
from the set of purely logical axioms.
For us, it is important to note, that the notion of \emph{probability}
was always modeled over some physical system 
\footnote{In the famous list of Hilbert problems \cite{hilbert},
  axiomatic treatment of probability was the most important
  task of the 6th problem:
  \emph{Mathematical Treatment of the Axioms of Physics.}}.
In case of classical probability, 
the sample space can be treated as a classical phase space
(set of pure classical states, not neccessarily Hamiltonian system),
random events correspond to subsets of the phase space
and probability measure is classical (in general mixed) state.
So, by analogy, quantum probability can be defined by
specifing the set random events, 
that correspond to orthogonal projectors on the Hilbert space of the system
and a density matrix that defines a measure 
on the set of all projectors 
(we lack the notion of sample space, 
but that is why the quantum probability is quantum).
These ideas motivate definitions presented in this section.

\begin{dfn}
    A \emph{quantum logic} is a partialy ordered set $\mathcal L$
    with a map $\ocmpl\colon\mathcal L\to \mathcal L$ such that
    \begin{enumerate}
        \item[L1] there exists the greatest (denoted by $\bbone$) 
            and the least (denoted by $0$) element in $\mathcal L$,
        \item[L2] map $p \mapsto p\ocmpl$ is order reversing, i.e.\
            $p \le q$ implies that $q\ocmpl \le p\ocmpl$,
        \item[L3] map $p \mapsto p\ocmpl$ is idempotent, i.e.\
            $(p\ocmpl)\ocmpl = p$,
        \item[L4] for a countable family $\set{p_i}$, s.t.\ $p_i \le p_j\ocmpl$
            for $i\neq j$, the supremum $\bigvee \set{p_i}$ exists,
        \item[L5] if $p\le q$ then $q = p \vee(q\wedge p\ocmpl)$ 
            (orthomodular law),
    \end{enumerate}
    where 
    $p\vee q$ is the least upper bound 
    and $p \wedge q$ the greatest lower bound of $p$ and $q$.
\end{dfn}

Elements of a quantum logic $\mathcal L$
are interpreted as \emph{propositions} about a physical system
(equivalence class of experimental setups that
result in one of two outcomes: either \true\ or \false).
The partial order relation $p \le q$ 
is interpreted as ``$q$ is more plausible than $p$''
(e.g.\ whenever $p$ is \true, $q$ is \true\ as well). 
The greatest element of $\mathcal L$ corresponds 
to trivial experimental questions, i.e.\ 
the ones that always result in \true.
Contrary, the least element corresponds to
experimental questions that are always \false.
Map $p\mapsto p\ocmpl$ encodes \emph{negation},
i.e.\ whenever $p$ is \verb|true|, $p\ocmpl$ is \verb|false|, etc.
Then L2 and L3 simply encode the basic properties of negation.
Whenever $p \le q\ocmpl$ we say that $p$ and  $q$ are \emph{disjoint}
and denote it by $p \perp q$.
It is clear, that it means that $p$ and $q$ are mutually exclusive
and consequently it should be meaningful to ask question ``$p$ or $q$''.
L4 extends this intuition to any countable family of disjoint elements.
The last one, L5, lacks direct interpretation.
One can think of it as a very form of weak distributivity.
Nevertheless, it has profound technical importance.

If for any $p, q\in\mathcal L$, $p\vee q, p \wedge q$ exist,
then $\mathcal L$ is said to be an \emph{orthomodular lattice}.
On the other hand, the set of projectors on the Hilbert space,
ordered by subspace inclusion and with $P\ocmpl = I - P$
is always an orthomodular lattice.
Thus we can define quantum probability as a orthomodular lattice.
Moreover, if the distributivity law $p \vee (q \wedge r) = (p \vee q)\wedge(p \vee r)$
holds, then $\mathcal L$ is a Boolean algebra, 
i.e.\ one can find a set $\Omega$ such that $\mathcal L$
can be identified with a $\sigma$-algebra of its subsets.

Finally, let us point that it follows directly from L1 and L2
that both of de Morgan laws are satisfied, i.e.\
if $p\vee q$ exists then $p\ocmpl\wedge q\ocmpl$ exists
and is equal to $(p\vee q)\ocmpl$,
similarly if $p\wedge q$ exists then $p\ocmpl\vee q\ocmpl$ exists
and is equal to $(p\wedge q)\ocmpl$.

\begin{dfn}
  An element $p\neq 0$ of a quantum logic $\mathcal L$
  is called an \emph{atom} whenever $0 \le q \le p$
  implies that $q = 0$ or $q = p$.
  $\mathcal L$ is called \emph{atomic} whenever for any $q\in \mathcal L$
  there exists an atom $p \le q$.
  It is called \emph{atomistic}, whenever any element $q\in \mathcal L$
  is a supremum of atoms less than $q$. 
\end{dfn}

\begin{dfn}
    A subset $\mathcal K\subset\mathcal L$ of quantum logic $\mathcal L$
    is a \emph{sublogic,} whenever 
    (i) if $p\in\mathcal K$ then $p\ocmpl\in\mathcal K$
    (ii) for any countable family $\set{p_i}\subset\mathcal K$ 
    of mutually disjoint elements $\bigvee\set{p_i} \in \mathcal K$.
\end{dfn}

\begin{dfn}
    Let $\mathcal L$ be a quantum logic.
    A \emph{state} $\rho$ on $\mathcal L$ is a map
    $\rho\colon\mathcal L \to [0, 1]$, s.t.\
    \begin{enumerate}
        \item[S1] $\rho(\bbone) = 1$,
        \item[S2] for a countable family $\set{p_i}$, s.t.\ $p_i \le p_j\ocmpl$
            $\rho(\bigvee\set{p_i}) = \sum_i p_i$.
    \end{enumerate}
    Set of all states for quantum logic $\mathcal L$
    will be denoted by $\mathcal S(\mathcal L)$.
\end{dfn}

Value $\rho(p), p\in\mathcal L$ can be interpreted 
as probability of getting answer \true\ for $p$.
It is straightforward to check that this agrees
with the interpretation of elements of $\mathcal L$.
In particular, it can be easily shown, 
that if $p \le q$ then $\rho(p) \le \rho(q), \forall \rho$.
In physical applications, we usually assume that we have
enough states to determine the order, i.e.\
if $\rho(p) \le \rho(q), \forall \rho$ then $p \le q$.

From the physical point of view we are often interested
in subsets of observables which outcomes can be described
by classical probability models
(e.g.\ complete set of commuting observables in quantum physics,
or in other words, simultenously measureable observables).
For this purpose we define:

\begin{dfn}
    Let $\mathcal L$ be a quantum logic.
    Then $p, q\in\mathcal L$ are \emph{compatible},
    what we denote by $p\compat q$,
    whenever there exist pairwise disjoint questions $p_1, q_1, r$
    such that $p = p_1 \vee r, q = q_1 \vee r$.

    More generaly, a subset $A\subset\mathcal L$
    is said to be \emph{compatible} whenever
    for any finite subset $\set{p_1, \dots, p_n}\subset A$
    there exist finite subset $G\subset\mathcal L$,
    such that (i) elements of $G$ are mutually disjoint,
    (ii) any $p_i$ is supremeum of some subset of $G$.
\end{dfn}

This definition looks very technical.
It is easier to think of compatibility in terms of the following property:

\begin{theorem}[\cite{ptak1991orthomodular}, Thm. 1.3.23]
    Let $A\subset\mathcal L$ be a compatible subset of quantum logic.
    Then there exists a Boolean sublogic $\mathcal K\subset \mathcal L$,
    s.t. $A\subset \mathcal K$. 
    \label{thm:compat}
\end{theorem}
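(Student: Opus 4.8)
The plan is to reduce the statement to the case of a single finite family of pairwise disjoint elements — for which a Boolean sublogic can be written down explicitly — and then to assemble these finite pieces over all finite subsets of $A$, closing up at the end.

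First, the finite building block. Given pairwise disjoint $g_1,\dots,g_m$, put $g_0\defeq(\bigvee_{i=1}^m g_i)\ocmpl$, so that $g_0,\dots,g_m$ are pairwise disjoint with $\bigvee_{i=0}^m g_i=\bbone$. I would study the map $\phi\colon 2^{\{0,\dots,m\}}\to\mathcal L$, $\phi(S)=\bigvee_{i\in S}g_i$, and show it is an order embedding of the finite Boolean algebra $2^{\{0,\dots,m\}}$ respecting $\ocmpl$, $\vee$ and $\wedge$. Monotonicity and preservation of joins are immediate from associativity of orthogonal suprema, while the identities $\phi(S)\ocmpl=\phi(S^c)$ and $\phi(S)\wedge\phi(T)=\phi(S\cap T)$ rest on orthomodularity (L5); equivalently, pairwise disjoint elements are pairwise compatible, so the lattice they generate is distributive. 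Because the image is finite, any countable orthogonal family in it has only finitely many nonzero members, so the image is closed under countable orthogonal suprema and is a finite Boolean sublogic $\mathcal B(g_1,\dots,g_m)$.

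Next, the passage to $A$. For each finite $F\subset A$ the compatibility hypothesis furnishes a finite disjoint family $G_F$ of which every element of $F$ is a supremum, whence $F\subset\mathcal B(G_F)$. The sublogic $\langle F\rangle$ generated by $F$ (the intersection of all sublogics containing $F$) is therefore a subalgebra of the finite $\mathcal B(G_F)$, hence itself a finite Boolean sublogic. Since $F\subset F'$ forces $\langle F\rangle\subset\langle F'\rangle$, the family $\{\langle F\rangle\}$ indexed by the finite subsets of $A$ is directed by inclusion, and its union $\mathcal B\defeq\bigcup_F\langle F\rangle$ is a Boolean subalgebra of $\mathcal L$ containing $A$: it is closed under $\ocmpl$ and under finite orthogonal suprema, and distributivity passes to a directed union of distributive algebras.

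The remaining step, which I expect to be the main obstacle, is to upgrade $\mathcal B$ to a genuine sublogic: a directed union of finite algebras need not absorb the countable orthogonal suprema that L4 guarantees to exist in $\mathcal L$. I would let $\mathcal K$ be the sublogic generated by $\mathcal B$ and show that adjoining such suprema preserves distributivity. The crux is a continuity relation: for an orthogonal family $\{p_i\}\subset\mathcal B$ with $p=\bigvee_i p_i$ and any $q\in\mathcal B$ one must prove $p\wedge q=\bigvee_i(p_i\wedge q)$ together with its order-dual, after which distributivity propagates from $\mathcal B$ to $\mathcal K$. This verification is the only genuinely infinitary part of the argument and leans on the orthomodular law L5 and the monotonicity of orthogonal suprema. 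Granting it, $\mathcal K$ is a Boolean sublogic with $A\subset\mathcal B\subset\mathcal K$, which is the claim.
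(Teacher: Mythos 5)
First, a point of reference: the paper does not prove this statement at all --- it is imported verbatim from Pt\'ak--Pulmannov\'a (Thm.~1.3.23) and used as a black box --- so there is no in-paper argument to compare yours against; I can only assess your proposal on its own terms. Your finite building block is correct and standard: for a finite orthogonal system $g_0,\dots,g_m$ with join $\bbone$, the elements $\phi(S)=\bigvee_{i\in S}g_i$ do form a finite Boolean subalgebra whose meets, joins and orthocomplements agree with those of $\mathcal L$ (the key observation being that $r\le\phi(S)$ forces $r\perp g_j$ for every $j\notin S$, whence $\phi(S)\wedge\phi(T)=\phi(S\cap T)$).

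The genuine gap is in the assembly step. You replace $\mathcal B(G_F)$ by the \emph{sublogic} $\langle F\rangle$ generated by $F$ and assert that, being contained in a finite Boolean algebra, it is itself Boolean. That is false: with the paper's definition a sublogic is closed only under orthocomplementation and disjoint joins, and such a subset of a Boolean algebra need not be Boolean --- the paper's own example of the even-cardinality subsets of $\{1,\dots,2k\}$ already makes this point. Concretely, in $2^{\{1,2,3,4\}}$ the compatible set $F=\{\{1,2\},\{1,3\}\}$ generates the sublogic $\{\emptyset,\{1,2\},\{3,4\},\{1,3\},\{2,4\},\Omega\}$, which as an orthoposet is the non-distributive lattice $\mathrm{MO}_2$: here $\{1,2\}\wedge(\{1,3\}\vee\{2,4\})=\{1,2\}$ while $(\{1,2\}\wedge\{1,3\})\vee(\{1,2\}\wedge\{2,4\})=0$. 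So the members of your directed family need not be Boolean and the union argument collapses. The repair is to use instead the Boolean \emph{subalgebra} $B(F)$ generated by $F$ (closure under $\ocmpl$, finite $\vee$ and finite $\wedge$), and to verify that these operations, computed inside $\mathcal B(G_F)$, coincide with those of $\mathcal L$ and therefore do not depend on the chosen refinement $G_F$ --- this is exactly what restores canonicity and directedness. Finally, the closure of the directed union under countable orthogonal suprema, which you explicitly ``grant,'' is not a loose end but the actual infinitary content of Thm.~1.3.23; the continuity identity $p\wedge q=\bigvee_i(p_i\wedge q)$ is left entirely unproved. (For the finite logics constructed in this paper that last step is vacuous, but the theorem as stated is general, so as a proof of the statement the proposal is incomplete on this count as well.)
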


Observe, that in general pairwise compatiblity in $A\subset \mathcal L$
is not sufficient for existence of a Boolean sublogic containing $A$
(cf.\ \cite{ptak1991orthomodular} for examples)

Although elements of $\mathcal L$ can be abstract,
we will be interested in one particular class of quantum logics,
namely:

\begin{dfn}[see \cite{ptak1991orthomodular}, Sec. 1.1]
    A \emph{concrete logic} $\Delta$
    is a family of subsets of some set $\Omega$
    with partial order relation given by set inclusion 
    and $A\ocmpl = \Omega\setminus A$ satisfying:
    \begin{enumerate}
        \item[C1] $\emptyset\in\Delta$,
        \item[C2] $A\in \Delta$ implies $\Omega\setminus A \in \Delta$,
        \item[C3] for any countable family $\set{A_i}\subset \Delta$
            of mutually disjoint sets
            $\bigcup \set{A_i} \in \Delta$.
    \end{enumerate} 
\end{dfn}

\begin{example}
    Let $\Omega = \set{1,\dots, 2k}$ 
    and $\Delta$ be a family of subsets $\Omega$
    with even number of elements.
    Then $\Delta$ is a concrete logic,
    which is a Boolean algebra for $k=1$,
    an orthomodular lattice for $k=2$ and
    a quantum logic for $k=3$.
\end{example}

Although concrete logics are obviously more general than
Boolean algebras and even orthomodular lattices,
they exhibit some classical properties.
For example, Heisenberg uncertainty relations
are not satisfied in concrete logics \cite{pulmanova2007}, 
precisely

\begin{dfn}
  An \emph{observable} $X$ on a quantum logic $\mathcal L$
  is a map $X\colon\borel(\reals) \to \mathcal L$,
  where $\borel(\reals)$ is a set of Borel subsets of real line,
  such that:
  \begin{enumerate}
  \item[O1] $X(\reals) = \bbone$,
  \item[O2] $X(\reals\setminus A) = X(A)\ocmpl$,
  \item[O3] $X(A_1\cup \dots \cup A_n) = X(A_1)\vee\dots\vee X(A_n)$
    for any family of mutually disjoint $A_i$'s.
  \end{enumerate}
\end{dfn}

For an observable $X$ we can define its expected value
and variance by
\begin{align*}
    \mu(X) & \defeq \int_\reals t \mu(X(dt)),\\
    \Delta_\mu X & \defeq \int_\reals (t-\mu(X))^2 \mu(X(dt)),
\end{align*}
whenever integrals exist.
Then for any pair of observables $X, Y$
with finite expected value and variance
one of two conditions is satisfied, either
\begin{align}
  &\forall \varepsilon>0\,
  \exists \hbox{ a state }\mu \hbox{ with finite variance for } X \hbox{ and } Y,\nonumber\\
  &(\Delta_\mu X)(\Delta_\mu Y) < \varepsilon
  \label{eq:heisenberg-fail}
\end{align}
or
\begin{align}
  &\exists \varepsilon>0\,
  \forall \hbox{ states } \mu \hbox{ with finite variance for } X \hbox{ and } Y,\nonumber\\
  &(\Delta_\mu X)(\Delta_\mu Y) \ge \varepsilon.
  \label{eq:heisenberg}
\end{align}
In the former case we say that
\emph{Heisenberg uncertainty relations are not satisfied,}
while in the latter case we say that
\emph{Heisenberg uncertainty relations are satisfied.}
Then it follows that if $\mathcal L$ is a concrete logic
then the Heisenberg uncertainty relations are not satisfied
(see Thm.~50 and Thm.~129 in \cite{pulmanova2007}).

\section{Concrete logic of non-signalling theories}

Let us fix our attention on arbitrary
box world $(\mathcal U, \mathcal V)$,
with $N, M$ be the number of distinct input values
on left box and right box respectively.
We define sets
\begin{align*}
    \Gamma_1 &= \set{(x_1, x_2, \dots, x_N) \setdef x_i \in \mathcal U_i},\\
    \Gamma_2 &= \set{(y_1, y_2, \dots, y_M) \setdef y_i \in \mathcal V_i},\\
    \Gamma   &= \Gamma_1\times \Gamma_2
\end{align*}
If boxes obeyed classical physics,
then $\Gamma_1, \Gamma_2, \Gamma$ would correspond
to the phase spaces of left box, right box and the phase
space of a composite system.

With any experimental question of the form:
\begin{equation}
  \parbox{20em}{``does intput $a$ on left box and $b$ on right\\
    result in output $\alpha$ on left box and $\beta$ on right?''}
  \label{eq:expq}
\end{equation}
we assign an element $[a\alpha, b\beta]\subset\Gamma$ 
in the following way:
\begin{equation}
  [a\alpha, b\beta] = \set{(x, y)\in \Gamma \setdef x_a = \alpha, y_b = \beta}
  \label{eq:atoms}
\end{equation}
We denote by $\mathcal A$ the set of all such subsets of $\Gamma$.
Let $\mathcal L$ be a sublogic of Boolean algebra $2^\Gamma$
generated by $\mathcal A$
(the smallest sublogic containing $\mathcal A$;
it exists, see \cite{ptak1991orthomodular}).
It follows that:

\begin{obs}
    $\mathcal L$ is a concrete logic with finite number of elements.
    Moreover, any element of $\mathcal L$ 
    is either a finite union of mutually disjoint sets from $\mathcal A$
    or is an empty set.
\end{obs}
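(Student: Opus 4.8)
The plan is to dispatch the two easy assertions first and then concentrate on the structural claim, where the real work lies. For finiteness: $\Gamma=\Gamma_1\times\Gamma_2$ is a finite product of finite sets, so $2^\Gamma$ is finite and $\mathcal L\subseteq 2^\Gamma$ is finite. For ``concrete logic'': $2^\Gamma$ is a Boolean algebra, in particular a concrete logic on $\Omega=\Gamma$ with $A\ocmpl=\Gamma\setminus A$, and any sublogic of a concrete logic is again a concrete logic, since closure under complements is C2, closure under disjoint unions is C3, and the empty disjoint family gives $\emptyset=\bigvee\emptyset\in\mathcal L$, i.e.\ C1. Finiteness collapses all countable operations to finite ones, so nothing further is needed.

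For the structure I would let $\mathcal M$ denote $\emptyset$ together with all finite unions of pairwise disjoint members of $\mathcal A$. Then $\mathcal M\subseteq\mathcal L$ is immediate, because $\mathcal L$ contains $\mathcal A$ and $\emptyset$ and is closed under finite disjoint unions. For the reverse inclusion I would show that $\mathcal M$ is itself a sublogic containing $\mathcal A$; minimality of $\mathcal L$ then forces $\mathcal L\subseteq\mathcal M$, whence $\mathcal L=\mathcal M$. Closure of $\mathcal M$ under disjoint unions is routine, obtained by concatenating the decompositions. The entire difficulty is closure under complementation: I must show $\Gamma\setminus A\in\mathcal M$ whenever $A=\bigsqcup_{i=1}^{k}C_i$ with the $C_i\in\mathcal A$ pairwise disjoint. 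This is the main obstacle, and it genuinely requires the special shape of the $\mathcal A$-elements: one cannot simply peel off one $C_i$ at a time, since the set-difference of two $\mathcal A$-elements need not belong to $\mathcal M$.

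The crux is the following structural lemma. Writing $C_i=[a_i\alpha_i,b_i\beta_i]=S_{a_i\alpha_i}\times T_{b_i\beta_i}$ with slices $S_{a\alpha}=\{x:x_a=\alpha\}$ and $T_{b\beta}=\{y:y_b=\beta\}$, one has $C_i\cap C_j=(S_{a_i\alpha_i}\cap S_{a_j\alpha_j})\times(T_{b_i\beta_i}\cap T_{b_j\beta_j})$, so $C_i$ and $C_j$ are disjoint iff their left slices are disjoint (which forces $a_i=a_j$, $\alpha_i\neq\alpha_j$) or their right slices are disjoint (forcing $b_i=b_j$, $\beta_i\neq\beta_j$). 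Consequently two parts with different left input \emph{and} different right input always meet. Hence the index pairs $(a_i,b_i)$ pairwise agree in some coordinate, i.e.\ they form a clique in the rook graph on $\{1,\dots,N\}\times\{1,\dots,M\}$; since every such clique lies in a single row or a single column, either all $a_i$ coincide or all $b_i$ coincide. The same computation in one factor shows that pairwise disjoint slices of $\Gamma_2$ (or $\Gamma_1$) must share one common input, because $T_{b\beta}\cap T_{b'\beta'}\neq\emptyset$ whenever $b\neq b'$.

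With the lemma in hand, assume the common-left-input case $a_i\equiv a$ (the other is symmetric). Partition $\Gamma_1=\bigsqcup_{\alpha\in\mathcal U_a}S_{a\alpha}$ by the value of coordinate $a$; then $A=\bigsqcup_\alpha S_{a\alpha}\times U_\alpha$ with $U_\alpha=\bigcup_{i:\alpha_i=\alpha}T_{b_i\beta_i}$. For fixed $\alpha$ the parts indexed by $\{i:\alpha_i=\alpha\}$ share the identical left slice $S_{a\alpha}$, so their disjointness must come from the right factor; thus the slices in $U_\alpha$ are pairwise disjoint, and by the single-factor remark they share a common right input, giving $U_\alpha=\bigsqcup_{\beta\in F_\alpha}T_{b^{(\alpha)}\beta}$ for some input $b^{(\alpha)}$ and $F_\alpha\subseteq\mathcal V_{b^{(\alpha)}}$ (with $U_\alpha=\emptyset$, $F_\alpha=\emptyset$ when no $i$ satisfies $\alpha_i=\alpha$). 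Its complement in $\Gamma_2$ is $\bigsqcup_{\beta\notin F_\alpha}T_{b^{(\alpha)}\beta}$, again a disjoint union of right slices, so $\Gamma\setminus A=\bigsqcup_\alpha S_{a\alpha}\times(\Gamma_2\setminus U_\alpha)=\bigsqcup_\alpha\bigsqcup_{\beta\notin F_\alpha}[a\alpha,b^{(\alpha)}\beta]$. Distinct $\alpha$ give disjoint left slices, and for fixed $\alpha$ distinct $\beta$ give disjoint right slices, so this is a pairwise disjoint union of $\mathcal A$-elements, i.e.\ an element of $\mathcal M$. This establishes complement-closure, hence $\mathcal L=\mathcal M$, which is exactly the asserted description of the elements of $\mathcal L$.
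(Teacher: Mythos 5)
Your proof is correct. The paper itself offers no proof of this Observation at all --- it is asserted as an immediate consequence of the construction --- so your argument is not so much an alternative route as the missing justification. The two easy claims (finiteness, concrete-logic axioms for a sublogic of $2^\Gamma$) are handled exactly as one would expect. The substantive content is your closure-under-complement argument for the class $\mathcal M$ of finite disjoint unions of $\mathcal A$-elements, and the combinatorial lemma driving it --- that $[a\alpha,b\beta]\perp[c\gamma,d\delta]$ forces agreement of an input on one side, so that a pairwise disjoint family of atoms has all its input pairs in a single ``row or column'' of the rook graph --- is precisely the (unstated) fact that the paper later exploits in the case analysis proving Lemma~\ref{thm:order-rel}; making it explicit is a genuine improvement. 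Your grouping of the complement as $\Gamma\setminus A=\bigcup_\alpha S_{a\alpha}\times(\Gamma_2\setminus U_\alpha)$ with each $\Gamma_2\setminus U_\alpha$ a disjoint union of slices over a single input $b^{(\alpha)}$ is clean and correct. Two trivial loose ends you may as well close: you should note that $\Gamma$ itself lies in $\mathcal M$ (e.g.\ $\Gamma=\bigoplus_{\alpha\in\mathcal U_1,\beta\in\mathcal V_1}[1\alpha,1\beta]$), which is needed both for the complement of $\emptyset$ and to get C1/C2 inside $\mathcal M$; and the lemma's claim that slices with different inputs always intersect silently uses that every $\mathcal U_i$, $\mathcal V_j$ is nonempty, which is part of the box-world setup.
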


Let us observe that $[a \alpha, b \beta] \perp [c \gamma, d \delta]$
whenever $[a \alpha, b \beta] \cap [c \gamma, d \delta] = \emptyset$,
i.e.\ when (i) $a = c$ and $\alpha \neq \gamma$ or
(ii) $b = d$ and $\beta \neq \delta$.
We will write $p\oplus q$ to indicate $p\vee q$ 
with implicit assumption that $p\perp q$.

\begin{lemma}
    If $p = [a \alpha, b \beta] \leq q$,
    then either 
    (i) $q = p \oplus q'$, 
    (ii) $q = [a \alpha, \bbone] \oplus q'$,
    (iii) $q = [\bbone, b \beta] \oplus q'$ or (iv) $q = \bbone$,
    where
    \begin{align*}
            [a \alpha, \bbone] &= \set{(x, y)\in\Gamma \setdef x_a = \alpha},\\
            [\bbone, b \beta] &= \set{(x, y)\in\Gamma \setdef y_b = \beta}.
    \end{align*}
    \label{thm:order-rel}
\end{lemma}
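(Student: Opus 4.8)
The plan is to reduce everything to a single ``peeling'' principle and then to read off the four cases from the atomic decomposition supplied by the Observation. First I would record that all four candidate first summands lie in $\mathcal L$: indeed $p=[a\alpha,b\beta]\in\mathcal A\subseteq\mathcal L$ and $\bbone=\Gamma\in\mathcal L$, while the two one-sided slices appear as finite disjoint unions of atoms, $[a\alpha,\bbone]=\bigoplus_{\delta\in\mathcal V_b}[a\alpha,b\delta]$ and $[\bbone,b\beta]=\bigoplus_{\gamma\in\mathcal U_a}[a\gamma,b\beta]$, hence belong to the sublogic $\mathcal L$. The peeling principle is then: if $P\in\mathcal L$ and $P\subseteq q$, then $q^{\ocmpl}$ and $P$ are disjoint, so $q^{\ocmpl}\oplus P\in\mathcal L$ by closure under finite disjoint unions, and therefore $q\setminus P=(q^{\ocmpl}\oplus P)^{\ocmpl}\in\mathcal L$ by closure under complementation; since $P\perp(q\setminus P)$ and their union is $q$, we obtain $q=P\oplus(q\setminus P)$.

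Applying the peeling principle with $P=p$ (legitimate since $p\subseteq q$) already yields alternative (i), so the stated disjunction holds. The real content of listing (ii)--(iv) is the finer description of how $q$ can sit above the atom $p$, which I would extract combinatorially. By the Observation write $q=A_1\oplus\dots\oplus A_k$ and examine the atoms $A_m=[c\gamma,d\delta]$ that meet $p$; since $A_m\cap p\neq\emptyset$, one cannot have $c=a$ with $\gamma\neq\alpha$, nor $d=b$ with $\delta\neq\beta$. This sorts the meeting atoms into \emph{left-aligned} ones ($c=a,\gamma=\alpha$), \emph{right-aligned} ones ($d=b,\delta=\beta$), the atom $p$ itself, and \emph{unaligned} ones ($c\neq a$ and $d\neq b$). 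I would then show that covering all of $p$ by left-aligned atoms sharing a fixed right input $d$ gives, via $\bigoplus_{\delta\in\mathcal V_d}[a\alpha,d\delta]=[a\alpha,\bbone]$, the slice of case (ii) after peeling; symmetrically the right-aligned atoms produce $[\bbone,b\beta]$ and case (iii); and that invoking an unaligned atom $[c\gamma,d\delta]$ forces, through $\bigoplus_{\gamma'\in\mathcal U_c,\,\delta'\in\mathcal V_d}[c\gamma',d\delta']=\Gamma$, the whole space, i.e.\ $q=\bbone$, which is case (iv).

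The main obstacle is not the disjunction itself---alternative (i) is always available by peeling---but the finer classification, and there two points require care. Because the atomic decomposition $q=A_1\oplus\dots\oplus A_k$ is \emph{not} unique (a one-sided slice can be cut along either factor), the four alternatives are not mutually exclusive and must be read as ``some decomposition of this shape exists''. The genuinely delicate case is the unaligned one: I must verify that once a covering atom disagrees with $p$ in both inputs, disjointness of the atoms needed to finish covering $p$ forces them to complete to a full product of one-input conditions, namely to $\Gamma$, and that no smaller element of $\mathcal L$ can serve instead. This is precisely where the product structure $\Gamma=\Gamma_1\times\Gamma_2$ and the form of the generating atoms $\mathcal A$ enter the argument.
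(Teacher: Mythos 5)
Your peeling principle is sound: $\mathcal L$ is closed under complementation and under unions of set-disjoint families, so $q\setminus p=\big(p\cup q\ocmpl\big)\ocmpl\in\mathcal L$ whenever $p\subseteq q$, and alternative (i) is literally always available. You are also right that this trivializes the disjunction as stated, and that the real content of the lemma --- the content actually invoked later in the proof of Theorem~\ref{thm:states} --- is the finer claim about atomic decompositions of $q$; the paper's proof reads (i) as ``$p$ occurs as one of the atoms of the decomposition'' and then analyses the remaining cases.

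The gap is in that finer analysis. Your trichotomy (all covering atoms left-aligned gives (ii), all right-aligned gives (iii), any unaligned atom forces (iv)) is wrong on the last point and silent on mixed coverings. Concretely, take two binary-output inputs on each box, $p=[1\,0,1\,0]$, and $q=[\bbone,1\,0]\oplus[2\,0,1\,1]$. Then $q\neq\bbone$, yet $q$ admits the atomic decomposition $[2\,0,2\,0]\oplus[2\,0,2\,1]\oplus[2\,1,1\,0]$, in which all three atoms meet $p$ and the first two are unaligned. The correct conclusion here is (iii), and it is extracted from the single right-aligned atom $[2\,1,1\,0]$: the paper's argument shows that once one covering atom has the form $[c\gamma,b\beta]$ with $c\neq a$, mutual disjointness of the decomposition forces an entire subfamily to sum to $[\bbone,b\beta]$, regardless of what the unaligned atoms are doing; this propagation of disjointness constraints through a mixed covering is the hard part of the proof and is absent from your sketch. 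What is true --- and admits a cleaner proof than the mechanism you propose --- is that if $p$ is covered \emph{exclusively} by unaligned atoms then $q=\bbone$: such atoms place no constraint on the coordinates $x_a$ and $y_b$, so any union of them that contains $\set{x_a=\alpha,\,y_b=\beta}$ already contains all of $\Gamma$. Note also that your suggested route to $\bbone$, namely that the covering must contain every $[c\gamma',d\delta']$ for one fixed pair $(c,d)$, is not how this happens: the unaligned atoms in a disjoint covering may use several distinct inputs $d_i$ on the second box.
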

\begin{proof}
    Let $q = q_1 \oplus \dots \oplus q_n$ with
    $q_i = [a_i \alpha_i, b_i \beta_i]\in\mathcal A$.
    Alternative (i) is obvious, so let us assume that none of $q_i$'s equals $p$.
    Let first $k$ of $q_i$'s be not disjoint with $p$, 
    i.e.\ $p \notperp q_i$ for $i=1,\dots, k$ 
    and $p\perp q_i$ for $i=k+1, \dots, n$.
    Thus any of $q_i$'s for $i=1, \dots, k$ must be of the form:
    (I) $q_i = [a_i \alpha_i, b \beta]$,
    (II) $q_i = [a \alpha, b_i \beta_i]$ or
    (III) $q_i = [a_i\alpha_i, b_i \beta_i]$,
    where $a_i \neq a$, $b_i \neq b$.

    Let us firstly examine the case when one of $q_i$'s,
    say $q_1$, is of the form (I).
    By mutual disjointness of $q_i$'s
    all remaining $i=2,\dots, k$ are either of form (I) or (III).
    Let $\alpha'\in\mathcal U_{a_1}, \alpha' \neq \alpha_1$
    be another output for input $a_1$.
    Then $(x, y)\in p$, where $x_a = \alpha, x_{a_1} = \alpha'$,
    $y_b = \beta$ and rest arbitrary, but $(x, y)\notin q_1$.
    Consequently, there must be another $q_i$,
    say $q_2$, such that $(x, y)\in q_2$.
    By previous comment, either
    (a) $q_2 = [a_1 \alpha', b \beta]$ (form (I) disjoint with $q_1$) or
    (b) $q_2 = [a_1 \alpha', b_2 \beta_2]$ (form (III) disjoint with $q_1$).

    Assume that (b) is the case.
    Take another $(x', y')\in p$ equal to $(x, y)$
    except for $y_{b_2} = \beta' \neq \beta_2$.
    Then $(x', y')\notin q_1$ since $x_{a_1} \neq \alpha_1$
    and $(x', y')\notin q_2$ since $y_{b_2} \neq \beta_2$. 
    Again, one of $q_i$'s must be of the form 
    $q_i = [a_1 \alpha', b_2 \beta']$ (form (III) disjoint with $q_1$ and $q_2$).
    Observe that there is no admissible form (I) 
    disjoint with $q_1$ and $q_2$ for that case.
    If we repeat this reasoning for all outputs $\mathcal V_{b_2}$
    we conclude that there is some set $I$ of $i=1,\dots, k$
    such that $\bigoplus_{i\in I} q_i = [a_1 \alpha', \bbone]$.
    But $[a_1 \alpha', \bbone] = [a_1 \alpha', b \beta]\oplus r$,
    where $r\perp p$, thus case (b) reduces to (a).

    Consequently, we can assume without loss of generality (a) is always the case.
    We repeat reasoning for all outputs in $\mathcal U_{a_1}$
    and conclude that there must be some set $J$ of $i=1, \dots, k$
    such that $\bigoplus_{i\in J} q_i = [\bbone, b \beta]$. 
    The case when one of $q_i$'s is of the form II is symmetric.
    
    Finally, let $q_1 = [a_1 \alpha_1, b_1 \beta_1]$.
    Take $(x, y)\in p$ such that $x_a = \alpha, x_{a_1} = \alpha'$,
    $y_b = \beta, y_{b_1} = \beta_1$ and rest arbitrary.
    Clearly $(x, y)\notin q_1$. Let $(x, y) \in q_2$.
    Since all of $q_i$'s must be of the form III, 
    $q_2 \perp q_1$ result in either
    (a) $q_2 = [a_1 \alpha', b_2 \beta_2], b_2 \neq b_1$ or 
    (b) $q_2 = [a_1 \alpha', b_1 \beta_1]$.

    Assume (a). 
    Take another $(x', y')\in p$ equal to $(x, y)$ except for $y'_{b_2} = \beta'$. 
    Then there must be $q_i$, such that $(x', y')\in q_i$ 
    and $q_i \perp q_1$ and $q_i \perp q_2$ require that
    $q_i = [a_1 \alpha', b_2 \beta']$.
    Like previously, we repeat this for all possible outcomes of $b_2$
    and conclude that there must be subset $I$ of $i=1,\dots, k$ 
    such that 
    $\bigoplus_{i\in I} q_i = [a_1 \alpha_1, \bbone] = [a_1 \alpha_1, b \beta]\oplus r$,
    so we can rewrite $q$ in the way that there is $q_i$ of the form (I).

    On the other hand, if (b) is the case, we repeat
    reasoning for all other outputs of $a_1$ and
    then symmetrically for all outputs of $b_1$
    and conclude that $q = \bbone$.
\end{proof}

\begin{thm}
    Let $P$ be a PR-box state on $(\mathcal U, \mathcal V)$-box world
    and let $\mathcal L$ be an above defined logic.
    Then $\rho_P\colon\mathcal L\to[0, 1]$ defined by 
    \begin{enumerate}[(i)]
        \item $\rho_P(\emptyset) = 0$,
        \item $\rho_P([a \alpha, b \beta]) = P(\alpha \beta| ab)$,
        \item $\rho_P(\bigvee_i p_i) = \sum_i \rho_P(p_i)$, 
            for any set $\set{p_i}\subset\mathcal A$ of pairwise disjoint elements
    \end{enumerate}
    is a state on $\mathcal L$.
    On the other hand, any state $\rho$ on $\mathcal L$
    defines a PR-state $P_\rho$ on $(\mathcal U, \mathcal V)$-box world
    by: $P_\rho(\alpha\beta| ab) = \rho([a\alpha, b\beta])$.
    \label{thm:states}
\end{thm}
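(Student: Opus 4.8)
The plan is to treat the two directions separately, handling the converse first since it is essentially a computation, and then devoting the real effort to showing that $\rho_P$ is well defined. For the converse, given a state $\rho$ on $\mathcal L$ I would simply read off the three defining properties of a PR-state. Non-negativity of $P_\rho$ is immediate from $\rho(\mathcal L)\subseteq[0,1]$. For normalization, fix inputs $a,b$: the atoms $\set{[a\alpha, b\beta]}_{\alpha\beta}$ are pairwise disjoint and $\bigvee_{\alpha\beta}[a\alpha, b\beta] = \bbone$ (they partition $\Gamma$ according to the values of $x_a$ and $y_b$), so S2 and S1 give $\sum_{\alpha\beta}P_\rho(\alpha\beta|ab) = \rho(\bbone) = 1$. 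For non-signalling, I would compute $\sum_{\alpha}P_\rho(\alpha\beta|ab) = \rho\big(\bigvee_\alpha[a\alpha,b\beta]\big) = \rho([\bbone, b\beta])$ using S2, and observe that the right-hand side does not involve $a$ at all; hence it equals the same sum computed with any other input $c$, which is precisely \eqref{eq:nonsignal}. The second non-signalling identity follows by the symmetric computation with $[a\alpha, \bbone]$.

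For the forward direction, the Observation guarantees that every element of $\mathcal L$ is either $\emptyset$ or a finite disjoint join of atoms, and Lemma~\ref{thm:order-rel} shows that no atom properly contains another, so the elements of $\mathcal A$ are exactly the atoms; hence $\mathcal L$ is atomistic and any candidate state is forced to be the additive extension of its values on $\mathcal A$. Thus properties (i)--(iii) determine at most one map, and what must be proved is that (iii) is consistent, i.e.\ that $\sum_i\rho_P(p_i)$ does not depend on the decomposition $p = \bigoplus_i p_i$ into atoms. Once this is established, S2 is immediate, S1 follows from $\bbone = \bigoplus_{\alpha\beta}[a\alpha,b\beta]$ together with normalization, non-negativity is inherited from $P\ge 0$, and $\rho_P(p)\le 1$ follows from $\bbone = p\op p\ocmpl$ and additivity.

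The crux, and the step I expect to be the main obstacle, is therefore the well-definedness of $\rho_P$. The source of ambiguity is exactly that a single marginal set can be partitioned into atoms using different reference inputs of the opposite box: $[a\alpha, \bbone] = \bigoplus_{\beta\in\mathcal V_b}[a\alpha, b\beta]$ for every $b$, and $[\bbone, b\beta] = \bigoplus_{\alpha\in\mathcal U_a}[a\alpha, b\beta]$ for every $a$. My plan is to read the inputs $(a,b)$ as measurement contexts: for fixed $(a,b)$ the atoms $\set{[a\alpha,b\beta]}$ partition $\bbone$ and carry the distribution $P(\cdot|ab)$, while two distinct contexts overlap only in such marginal elements. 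I would then show that any two atom-decompositions of a given $p$ can be transformed into one another by a finite sequence of the two elementary re-partitioning moves above (together with the analogous move for $\bbone$), arguing this by induction on the number of atoms and using Lemma~\ref{thm:order-rel} to peel off a marginal block. Each elementary move leaves $\sum_i\rho_P(p_i)$ invariant precisely because the two non-signalling identities \eqref{eq:nonsignal} equate $\sum_\beta P(\alpha\beta|ab)$ across all $b$ and $\sum_\alpha P(\alpha\beta|ab)$ across all $a$, while normalization handles the $\bbone$ move. In other words, non-signalling is exactly the gluing condition that makes the context-wise distributions consistent on their shared marginals, and this is what upgrades the atom values into a genuine, decomposition-independent state.
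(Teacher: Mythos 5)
Your overall strategy coincides with the paper's: the converse direction is the same short computation (S2 applied to the partition of $\bbone$ by $\set{[a\alpha,b\beta]}_{\alpha\beta}$ and to the partitions of the marginal elements $[a\alpha,\bbone]$, $[\bbone,b\beta]$, whose $\rho$-values visibly do not depend on the reference input of the opposite box), and for the forward direction both you and the paper reduce everything to well-definedness of the additive extension on atoms, locate the ambiguity in the multiple atomic decompositions of the marginal elements, and invoke the non-signalling identities to make the re-partitioning moves value-preserving.

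The one place your proposal stops short of a proof is the claim that any two atomic decompositions of a fixed $p\in\mathcal L$ are connected by a finite chain of your elementary re-partitioning moves. You name the tools (induction on the number of atoms, Lemma~\ref{thm:order-rel} to peel off a marginal block) but do not carry out the argument, and this claim is essentially the entire content of the paper's proof. The subtlety is that the marginal block structure is itself not unique: the same $p$ can be written as $[a\alpha,\bbone]\op p'$ in one grouping and as $[\bbone,b\beta]\op q'$ in another, so one cannot simply match marginal blocks between the two decompositions and re-expand them; "peeling off a marginal block" presupposes agreement on which marginal blocks are present, which is exactly what fails here. The paper confronts this case head on, showing by a disjointness and point-chasing analysis that in the situation $[a\alpha,\bbone]\op p = [\bbone,b\beta]\op q$ the element $q$ must decompose as $\bigoplus_{\beta'\in\mathcal V_b^1}[a\alpha,b\beta']\op\bigoplus_{\beta'\in\mathcal V_b^2}[\bbone,b\beta']$ (and symmetrically for $p$), after which both sides expand to a common atomic decomposition and transitivity of equality finishes the argument. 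So your skeleton is right and matches the paper, but the combinatorial heart of the well-definedness argument is asserted rather than proved.
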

\begin{proof} 
    We need to show that the definition of $\rho_P$ is correct,
    i.e.\ for all $p\in\mathcal L$,
    and for all decompositions $p = p_1\oplus \dots \oplus p_n$ 
    (iii) results in the same value.
    Then the fact that $\rho$ is a state follows directly from its definition.
   
    The simplest case $p = \bbone$ follows from normalization of PR-box state. 
    If $p = [a \alpha, \bbone]$ or $p = [\bbone, b \beta]$
    for some $a, \alpha, b, \beta$
    then decomposition is not unique,
    but non-signalling condition guarantess 
    that (iii) does not introduce any ambiguity.
    It follows from lemma \ref{thm:order-rel}
    that $p$ has non-unique decomposition into atoms only if
    $[a \alpha, \bbone] \le p$ or $[\bbone, b \beta] \le p$ 
    for some $a, \alpha, b, \beta$
    Consequently, any $p$ can be decomposed into:
    $p = p_1 \op \dots \op p_k$
    where $p_i$ has one of three forms: 
    (A) $p_i = [a_i \alpha_i, b_i \beta_i]$, 
    (B) $p_i = [a_i \alpha_i, \bbone]$ or
    (C) $p_i = [\bbone, b_i \beta_i]$.
    
    Consequently, it remains to show that (iii) is valid in case of
    $[a \alpha, \bbone] \op p = [\bbone, b \beta] \op q$.
    Let $q = q_1 \op \dots \op q_k$ be decomposition into atoms. 
    Observe that for all $(x, y)\in p$, with $y_b = \beta'\neq \beta$
    $(x, y) \notin [\bbone, b \beta]$, so for any 
    $\beta' \in \mathcal V_b\setminus{\beta}$ there is (possible more than one)
    $q_i = [c_i \gamma_i, b \beta']$.

    Let $q_1 = [c \gamma, b \beta_1]$ and assume that $c_1 \neq a$.
    Take $(x, y)\in [a \alpha, \bbone]$, such that 
    $x_a = \alpha, x_c = \gamma' \neq \gamma, y_b = \beta_1$. 
    Then $(x, y)\notin [\bbone, b \beta]$ and $(x, y)\notin q_1$.
    There must be $q_i = [c_i \gamma_i, b \beta_i]$
    such that $(x, y)\in q_i$, but $q_i \perp q_1$ requires that 
    $c_i = c, \gamma_i = \gamma'$.
    Repeating this reasoning we conclude that there must be subset
    $I$ of $i=1,\dots,k$ such that $\bigoplus_{i\in I} q_i = [\bbone, b \beta_1]$.
    On the other hand, if $c_1 = a$, then we immediately get that
    $q_1 = [a \alpha, b \beta_1]$. 
    Consequently, $q$ can be written in the form
    $q = \bigoplus_{\beta'\in \mathcal V_b^1}[a \alpha, b \beta']
    \op \bigoplus_{\beta'\in\mathcal V_b^2}[\bbone, b \beta']$
    where $\mathcal V_b^1 \cup \mathcal V_b^2 = \mathcal V_b\setminus{\beta}$.
    The same can be done for $p$.
    Now it is easy to observe that by expanding terms
    $[a \alpha_i, \bbone]$ and $[\bbone, b \beta_i]$
    we can obtain the same atomic decomposition on the left and right.
    By transitivity of equality relation,
    we conclude that (iii) is well defined.

    The second part of theorem follows immediately from the fact,
    that non-signaling condition actually means that there
    are elements of the type $[a\alpha, \bbone], [\bbone, b\beta]$:
    \begin{equation*}
      \sum_{\beta\in\mathcal V_b} \rho([a\alpha, b\beta]) = \rho([a\alpha,\bbone]) =
      \sum_{\gamma\in\mathcal V_c}\rho([a\alpha,c \gamma]),\quad\text{etc.}
    \end{equation*}
\end{proof}

Now we are going to argue that $\mathcal L$
is the logic of $(\mathcal U, \mathcal V)$-box world.
Firstly, let us observe that the set of states on $\mathcal L$
is order determining.
One the other hand, in the operational construction of the logic
of the physical system one \emph{assumes} that the set of physical
states determines the order.
Consequently, states of the logic of $(\mathcal U, \mathcal V)$-box world
induced by the PR-states should be order determining.
Secondly, elements of $\mathcal A$ clearly correspond to atoms of
the $(\mathcal U, \mathcal V)$-box world logic.
Moreover, definition of the $(\mathcal U, \mathcal V)$-box world
explicitly enumerates all possible the most elementary experiment
that we can perform on the box world.
This somehow forces us to assume,
that the logic of $(\mathcal U, \mathcal V)$-box world
is not only atomic, but also atomistic.
Summarizing, the logic $\mathcal L$ is the minimal completion
of the set $\mathcal A$ that has sufficiently many states
(all PR-box states are represented)
and is consistent with definition of box world (all states are non-signaling).

Although the definition of $\mathcal L$ as a certain
concrete logic modeled on a classical system seems to be a lucky guess,
it emerged from the study of properties the abstractly constructed
logic of exemplary box world system with binary input and binary output
for both boxes \cite{tylec2015}.

Elements $[a \alpha, \bbone], [\bbone, b \beta]$ (and their valid $\oplus$-sums)
can be interpreted as propositions about only one of boxes,
thus we call them \emph{localized in A} or \emph{localized in B}
respectively.
Observe, that $[a \alpha, \bbone]\perp [a' \alpha', \bbone]$
if and only if $a = a'$ and $\alpha\neq\alpha'$.
We will use following notation:
\begin{equation*}
    [a\in \mathcal P, \bbone] = \oplus\set{[a \alpha, \bbone]\setdef \alpha\in \mathcal P},
    \text{ where } \mathcal P\subset \mathcal U_a.
\end{equation*}

\begin{proposition}
    Any pair of elements $p, q\in \mathcal L$
    such that $p$ is localized in A and $q$ is localized in B
    is compatible.
    \label{thm:compatibility}
\end{proposition}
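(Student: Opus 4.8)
The plan is to exploit the concrete, set-theoretic nature of $\mathcal{L}$: since the Observation tells us that every element is a finite disjoint union of atoms from $\mathcal{A}$, I can produce the witnesses required by the definition of compatibility by hand and then simply check that they lie in $\mathcal{L}$. First I would fix canonical forms for $p$ and $q$. Because $[a \alpha, \bbone]\perp[a' \alpha', \bbone]$ forces $a=a'$, every valid $\op$-sum of localized-in-A atoms shares a single input, so $p = [a\in\mathcal{P},\bbone] = \set{(x,y)\in\Gamma \setdef x_a\in\mathcal{P}}$ for one index $a$ and some $\mathcal{P}\subset\mathcal{U}_a$; symmetrically $q = [\bbone, b\in\mathcal{Q}] = \set{(x,y)\in\Gamma \setdef y_b\in\mathcal{Q}}$ with $\mathcal{Q}\subset\mathcal{V}_b$. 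The degenerate cases $p\in\set{\emptyset,\bbone}$ or $q\in\set{\emptyset,\bbone}$ are just the extreme choices of $\mathcal{P}$ or $\mathcal{Q}$ and so need no separate treatment.

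Next I would write the three witnesses directly as the overlap and the two set differences,
\begin{align*}
    r   &= \bigoplus_{\alpha\in\mathcal{P},\,\beta\in\mathcal{Q}}[a\alpha, b\beta],\\
    p_1 &= \bigoplus_{\alpha\in\mathcal{P},\,\beta\in\mathcal{V}_b\setminus\mathcal{Q}}[a\alpha, b\beta],\\
    q_1 &= \bigoplus_{\alpha\in\mathcal{U}_a\setminus\mathcal{P},\,\beta\in\mathcal{Q}}[a\alpha, b\beta],
\end{align*}
so that as subsets of $\Gamma$ they are precisely $p\cap q$, $p\setminus q$ and $q\setminus p$. The key point to record is that each is a union of atoms $[a\alpha, b\beta]$ carrying the \emph{same} pair of inputs $a,b$; atoms of this shape with distinct $(\alpha,\beta)$ are disjoint, so each displayed sum is a genuine $\op$-sum and hence a bona fide element of $\mathcal{L}$.

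Then I would verify the two clauses of the compatibility definition. The index sets $\mathcal{P}\times\mathcal{Q}$, $\mathcal{P}\times(\mathcal{V}_b\setminus\mathcal{Q})$ and $(\mathcal{U}_a\setminus\mathcal{P})\times\mathcal{Q}$ are pairwise disjoint, whence $p_1, q_1, r$ are pairwise disjoint; and $\mathcal{P}\times\mathcal{Q}$ together with $\mathcal{P}\times(\mathcal{V}_b\setminus\mathcal{Q})$ reassembles $\mathcal{P}\times\mathcal{V}_b$, i.e.\ $p_1\cup r = p$ as sets, and likewise $q_1\cup r = q$. The last move is to pass from set union to lattice join: in a concrete logic the join of disjoint elements is their union, since any element containing both must contain the union while C3 guarantees the union itself belongs to the logic. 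Hence $p = p_1\op r$ and $q = q_1\op r$, which are exactly the witnesses demanded, establishing $p\compat q$.

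I do not expect a genuinely hard step here; the two points I would be most careful to state are (a) that ``localized in A'' pins the input index $a$ to a single value, which is what legitimizes the atomic decompositions above as honest $\op$-sums inside $\mathcal{L}$, and (b) the identification of the disjoint lattice join with set-theoretic union, so that witnesses built by elementary set operations really do satisfy the join equations of the definition and not merely their set-level shadows.
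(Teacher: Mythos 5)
Your proposal is correct and follows essentially the same route as the paper: the paper's proof exhibits exactly the same three witnesses $p_1$, $q_1$, $r$ with the same index sets $\mathcal P\times(\mathcal V_b\setminus\mathcal Q)$, $(\mathcal U_a\setminus\mathcal P)\times\mathcal Q$ and $\mathcal P\times\mathcal Q$, and concludes $p=p_1\oplus r$, $q=q_1\oplus r$. The only difference is that you spell out the reduction to the canonical forms $[a\in\mathcal P,\bbone]$, $[\bbone,b\in\mathcal Q]$ and the identification of disjoint join with set union, which the paper leaves implicit.
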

\begin{proof} 
    Let $p = [a\in\mathcal P, \bbone]$, $q = [\bbone, b\in\mathcal Q]$.
    Let us define 
    $p_1 = \bigoplus_{\alpha\in\mathcal P,\beta\in\mathcal V_b\setminus\mathcal Q} [a \alpha, b \beta]$, 
    $q_1 = \bigoplus_{\alpha\in\mathcal U_a\setminus\mathcal P, \beta \in\mathcal Q} [a \alpha, b \beta]$,
    $r   = \bigoplus_{\alpha\in\mathcal P, \beta\in\mathcal Q} [a \alpha, b \beta]$.
    It is clear that $p_1, q_1, r$ are mutually disjoint and $p = p_1 \oplus r$,
    $q = q_1 \oplus r$.
\end{proof}

\begin{lemma}
    Let $p = [a\in \mathcal P, \bbone], q = [a'\in \mathcal Q, \bbone]$.
    Then:
    \begin{align*}
        p \wedge q &= 
        \begin{cases}
            0 & \text{if } a \neq a',\\
            [a\in \mathcal P\cap \mathcal Q, \bbone] & \text{otherwise,}
        \end{cases}\\
        p \vee q &=
        \begin{cases}
            \bbone & \text{if } a \neq a',\\
            [a\in \mathcal P\cup \mathcal Q, \bbone] & \text{otherwise.}
        \end{cases}
    \end{align*}
    \label{thm:vee-wedge}
\end{lemma}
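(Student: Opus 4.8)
The plan is to exploit two structural facts about $\mathcal L$: that it is a concrete logic (so the order is set inclusion, and $\wedge,\vee$ are greatest lower bound and least upper bound taken \emph{within} $\mathcal L$, not set intersection and union), and that by the Observation every nonzero element of $\mathcal L$ is a finite disjoint union of atoms from $\mathcal A$. Throughout I would assume $\emptyset \neq \mathcal P \subsetneq \mathcal U_a$ and $\emptyset \neq \mathcal Q \subsetneq \mathcal U_{a'}$, the degenerate cases where $p$ or $q$ collapses to $0$ or $\bbone$ being immediate. The one preliminary remark I need is this: whenever the set-theoretic intersection (resp.\ union) of two members of $\mathcal L$ happens to lie in $\mathcal L$, it is automatically their $\wedge$ (resp.\ $\vee$), since any lower bound $r\in\mathcal L$ satisfies $r\subseteq p$ and $r\subseteq q$, hence $r\subseteq p\cap q$, and dually any upper bound contains $p\cup q$.

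First I would dispatch the case $a = a'$. Here $p = \set{(x,y)\setdef x_a\in\mathcal P}$ and $q = \set{(x,y)\setdef x_a\in\mathcal Q}$, so set-theoretically $p\cap q = [a\in\mathcal P\cap\mathcal Q,\bbone]$ and $p\cup q = [a\in\mathcal P\cup\mathcal Q,\bbone]$. Both are $\op$-sums of atoms $[a\alpha,\bbone]$ sharing the first index, hence mutually disjoint, hence genuine elements of $\mathcal L$ by closure under disjoint unions (C3). By the preliminary remark they are therefore exactly $p\wedge q$ and $p\vee q$, giving the stated formulas.

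The substantive case is $a\neq a'$, and the meet is where the real work lies. Now $p\cap q = \set{(x,y)\setdef x_a\in\mathcal P,\ x_{a'}\in\mathcal Q}$ escapes $\mathcal L$, so instead I argue that no nonzero element of $\mathcal L$ lies below both $p$ and $q$. Suppose $r\le p$, $r\le q$ with $r\neq 0$; by the Observation $r$ contains some atom $[c\gamma,d\delta]\subseteq p\cap q$. The key combinatorial point, which I expect to be the main obstacle, is to show this is impossible: an atom $[c\gamma,d\delta]$ leaves the coordinate $x_a$ entirely unconstrained unless $c=a$, so (using $\mathcal P\subsetneq\mathcal U_a$) containment in $\set{x_a\in\mathcal P}$ forces $c=a$ and $\gamma\in\mathcal P$; symmetrically containment in $\set{x_{a'}\in\mathcal Q}$ forces $c=a'$. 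These two demands contradict $a\neq a'$, so no such atom exists, whence $r=0$ and $p\wedge q = 0$.

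Finally the join $p\vee q = \bbone$ I would obtain by duality rather than by a second counting argument. The complements $p\ocmpl = [a\in\mathcal U_a\setminus\mathcal P,\bbone]$ and $q\ocmpl = [a'\in\mathcal U_{a'}\setminus\mathcal Q,\bbone]$ are again localized in A at the distinct indices $a,a'$, and the subsets $\mathcal U_a\setminus\mathcal P$, $\mathcal U_{a'}\setminus\mathcal Q$ are nonempty and proper precisely because $\mathcal P,\mathcal Q$ were proper and nonempty. The meet computation just completed therefore gives $p\ocmpl\wedge q\ocmpl = 0$, and the de Morgan law recorded after the definition of quantum logic yields $p\vee q = (p\ocmpl\wedge q\ocmpl)\ocmpl = 0\ocmpl = \bbone$, which closes the proof.
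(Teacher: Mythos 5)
Your proof is correct and takes essentially the same route as the paper's: for $a\neq a'$ the meet vanishes because no nonzero element of $\mathcal L$ fits inside the set $p\cap q$, the join then follows by de Morgan, and the case $a=a'$ is the set-theoretic one. Your atom-based argument for the vanishing meet (any nonzero $r\in\mathcal L$ contains an atom, which constrains only one coordinate of $x$) simply makes explicit what the paper compresses into ``this set has constraints on two elements of $x$,'' and your explicit standing assumption that $\mathcal P,\mathcal Q$ are proper and nonempty is the same one the paper leaves implicit.
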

\begin{proof} 
    Firstly, let us assume that $a\neq a'$.
    Then $p \wedge q$ must be a subset of
    \begin{equation*}
        p\cap q = \set{(x, y)\setdef x_a\in \mathcal P, x_{a'}\in\mathcal Q}
    \end{equation*}
    but this set has constraints on two elements of $x$,
    so there is no non empty $r\in\mathcal L$, 
    such that $r\subset p\cap q$.
    Consequently, $p\wedge q = 0$.
    Then $p\vee q = \bbone$ follows from the de Morgan law. 
    When $a = a'$ the proof is obvious 
    ($p \cap q = p\wedge q$ and $p \cup q = p \vee q$).
\end{proof}

\begin{proposition}
    Elements $p = [a\in \mathcal P, \bbone], q = [a'\in \mathcal Q, \bbone]$ 
    are compatible if and only if $a = a'$. 
    \label{thm:localcomp}
\end{proposition}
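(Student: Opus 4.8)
The plan is to prove the two implications separately, reading off the meet from Lemma~\ref{thm:vee-wedge}; the direction $a=a'$ I would handle constructively and the direction $a\neq a'$ by contradiction. Since the statement concerns genuine propositions, I assume throughout that $\mathcal P$ and $\mathcal Q$ are nonempty proper subsets, so that $p,q\notin\set{0,\bbone}$; if either set is empty or exhausts its output alphabet then the corresponding element is $0$ or $\bbone$, which is compatible with everything, and the assertion is to be read with that understanding.

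First I would dispose of the case $a=a'$. Here both elements are $\oplus$-sums of atoms $[a\alpha,\bbone]$ indexed by outputs of the \emph{same} input $a$, so I can split the two output sets by hand. Put $r=[a\in\mathcal P\cap\mathcal Q,\bbone]$, $p_1=[a\in\mathcal P\setminus\mathcal Q,\bbone]$ and $q_1=[a\in\mathcal Q\setminus\mathcal P,\bbone]$. The three index sets are pairwise disjoint subsets of $\mathcal U_a$, whence $p_1,q_1,r$ are pairwise disjoint (recall $[a\alpha,\bbone]\perp[a\alpha',\bbone]$ exactly when $\alpha\neq\alpha'$), and directly from the definition of the notation $[a\in\mathcal P,\bbone]$ one gets $p=p_1\oplus r$ and $q=q_1\oplus r$. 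This is precisely the triple required by the definition of compatibility, so $p\compat q$.

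For the converse I would argue the contrapositive. Assume $a\neq a'$ and, toward a contradiction, that $p$ and $q$ are compatible, witnessed by pairwise disjoint $p_1,q_1,r$ with $p=p_1\oplus r$ and $q=q_1\oplus r$. Then $r\le p$ and $r\le q$, so $r$ is a lower bound of $p$ and $q$ and hence $r\le p\wedge q$; but Lemma~\ref{thm:vee-wedge} gives $p\wedge q=0$ when $a\neq a'$, forcing $r=0$. Consequently $p=p_1$ and $q=q_1$ are disjoint, i.e.\ $p\le q\ocmpl$, which in $2^\Gamma$ means $p\cap q=\emptyset$.

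The crux --- and the only genuinely non-formal step --- is to see that this last conclusion is false, which is exactly where the non-Boolean nature of $\mathcal L$ enters. In a concrete logic the meet $p\wedge q$ taken in $\mathcal L$ can be strictly smaller than the set-theoretic intersection, and that is what happens here: although $p\wedge q=0$, the set $p\cap q=\set{(x,y)\in\Gamma\setdef x_a\in\mathcal P,\ x_{a'}\in\mathcal Q}$ is nonempty, because for $a\neq a'$ the coordinates $x_a$ and $x_{a'}$ are independent and both $\mathcal P,\mathcal Q$ are nonempty. Thus $p\notperp q$, contradicting $p\cap q=\emptyset$, and therefore $p$ and $q$ cannot be compatible. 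The hard part is not any computation but keeping the two notions of ``disjointness'' --- vanishing meet in $\mathcal L$ versus empty intersection in $2^\Gamma$ --- carefully apart, since their coincidence is exactly what would make the logic classical.
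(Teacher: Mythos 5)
Your proof is correct, but it takes a genuinely different route from the paper's. For the non-trivial direction the paper also argues by contraposition, but via Theorem~\ref{thm:compat}: it computes $p \vee (p\ocmpl \wedge q) = p$ while $(p\vee p\ocmpl)\wedge(p\vee q) = \bbone$, so distributivity fails for $\set{p, p\ocmpl, q}$ and no Boolean sublogic can contain them, hence $p$ and $q$ are not compatible. You instead work straight from the definition of compatibility: any witness triple $p_1, q_1, r$ has $r \le p\wedge q = 0$ by Lemma~\ref{thm:vee-wedge}, forcing $p \perp q$, which is refuted by the set-theoretic fact $p\cap q \neq \emptyset$ in the concrete representation. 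Both arguments hinge on Lemma~\ref{thm:vee-wedge}; what yours buys is elementarity --- it bypasses the imported Theorem~1.3.23 of \cite{ptak1991orthomodular} and the attendant subtlety of whether meets computed in a Boolean sublogic agree with meets in $\mathcal L$ --- and it isolates exactly the non-classical feature responsible ($p\wedge q = 0$ while $p\cap q\neq\emptyset$). What the paper's version buys is an explicit exhibition of the failure of the distributive law, which is the more standard diagnostic of non-Booleanity. Your explicit handling of the $a=a'$ direction (the triple $p_1 = [a\in\mathcal P\setminus\mathcal Q,\bbone]$, $q_1=[a\in\mathcal Q\setminus\mathcal P,\bbone]$, $r=[a\in\mathcal P\cap\mathcal Q,\bbone]$) is correct and fills in a step the paper dismisses as trivial; your caveat that $\mathcal P,\mathcal Q$ must be nonempty proper subsets is also right, and is in fact needed by the paper's own proof (and by Lemma~\ref{thm:vee-wedge}) as well, since $0$ and $\bbone$ are compatible with everything.
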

\begin{proof} 
    We will show the non-trivial implication by contraposition.
    Assume that $a \neq a'$.
    Since $p\ocmpl = [a\in\mathcal U_a\setminus \mathcal P, \bbone]$
    \begin{equation*}
        p \vee (p\ocmpl \wedge q) = p \vee 0 = p,
    \end{equation*}
    and
    \begin{equation*}
        (p \vee p\ocmpl)\wedge(p\vee q) = \bbone \wedge \bbone = \bbone,
    \end{equation*}
    so $p, p\ocmpl, q$ cannot be contained in Boolean sublogic of $\mathcal L$.
\end{proof}

\begin{thm}
    A set of pairwise compatible localized elements 
    $[a\in\mathcal P_1,\bbone],\dots,[a\in\mathcal P_k, \bbone],
    [\bbone, b\in\mathcal Q_1], \dots, [\bbone, b\in\mathcal Q_l]$
    is compatible.
    \label{thm:compatible-sets}
\end{thm}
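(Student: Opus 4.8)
The plan is to verify the definition of a compatible subset directly, by exhibiting a single finite family $G$ of mutually disjoint elements that simultaneously refines every one of the listed localized propositions. Since the collection in question is already finite, it suffices to produce such a $G$ for the whole set (the same $G$ then works for every subset). First I would record the normalization that the notation already encodes: by Proposition~\ref{thm:localcomp}, two A-localized elements are compatible only if they carry the same input, so pairwise compatibility of the A-part forces $[a\in\mathcal P_1,\bbone],\dots,[a\in\mathcal P_k,\bbone]$ to share the single input $a$, and symmetrically the B-part shares the single input $b$; the mixed A--B pairs are automatically compatible by Proposition~\ref{thm:compatibility}. Thus the genuine content of the statement is joint, not pairwise, compatibility.

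Next I would take as the common refinement the full set of atoms built from these two inputs,
\[
  G = \set{[a\alpha, b\beta] \setdef \alpha\in\mathcal U_a,\ \beta\in\mathcal V_b}.
\]
These are mutually disjoint: two distinct members fix $x_a$ or $y_b$ to different values, so their intersection in $2^\Gamma$ is empty, which is exactly the relation $\perp$ in the concrete logic $\mathcal L$. This establishes condition (i) in the definition of a compatible subset for $G$.

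It then remains to check condition (ii), that every listed element is a supremum of a subset of $G$. The key identity is $[a\alpha,\bbone] = \bigoplus_{\beta\in\mathcal V_b}[a\alpha, b\beta]$, valid because the right-hand side is the set of $(x,y)$ with $x_a=\alpha$ and $y_b$ unconstrained, i.e.\ exactly $[a\alpha,\bbone]$. Summing over $\alpha\in\mathcal P_i$ gives
\[
  [a\in\mathcal P_i,\bbone] = \bigoplus_{\alpha\in\mathcal P_i,\ \beta\in\mathcal V_b}[a\alpha, b\beta],
\]
so $[a\in\mathcal P_i,\bbone]$ is the join of the subfamily of $G$ indexed by $\mathcal P_i\times\mathcal V_b$; the symmetric identity $[\bbone, b\in\mathcal Q_j] = \bigoplus_{\alpha\in\mathcal U_a,\ \beta\in\mathcal Q_j}[a\alpha, b\beta]$ handles the B-localized elements. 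Since each member of the collection is realized as a join of mutually disjoint atoms from $G$, condition (ii) holds and the collection is compatible; invoking Theorem~\ref{thm:compat} then even produces a Boolean sublogic containing it.

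I do not expect a serious obstacle. Once the reduction to a common pair of inputs $(a,b)$ is made, the argument collapses to the two set-theoretic identities above, which are immediate in the concrete model $2^\Gamma$. The only point deserving a line of care is confirming that the $\oplus$-sums written are genuine suprema, but this is guaranteed by L4 since the summands are pairwise disjoint atoms, so each supremum coincides with the corresponding set-theoretic union.
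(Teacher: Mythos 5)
Your proof is correct, and it rests on the same underlying idea as the paper's --- exhibit a single finite orthogonal family $G$ of which every listed element is a supremum --- but your choice of $G$ is genuinely simpler. The paper first passes to common refinements $\set{\tilde{\mathcal P}_s}$ of the $\mathcal P_i$ and $\set{\tilde{\mathcal Q}_t}$ of the $\mathcal Q_j$ and then builds a coarse orthogonal family $p_i, q_j, r_{ij}$ of $\oplus$-sums of atoms, directly generalizing the triple $(p_1, q_1, r)$ used for pairwise compatibility in Proposition~\ref{thm:compatibility}; you instead take $G$ to be the full atomic grid $\set{[a\alpha, b\beta]\setdef \alpha\in\mathcal U_a, \beta\in\mathcal V_b}$, which refines the paper's family and reduces condition (ii) to the one-line set identity $[a\in\mathcal P_i,\bbone]=\bigoplus_{\alpha\in\mathcal P_i,\beta\in\mathcal V_b}[a\alpha,b\beta]$. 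Since the definition of a compatible subset places no further demands on $G$ (it need not cover $\bbone$, nor be coarse), your version is a legitimate and shorter proof; what the paper's tailored partition buys is a more explicit picture of the Boolean sublogic generated by the given elements, in the spirit of Theorem~\ref{thm:compat}, and a decomposition that visibly mirrors the pairwise case. Your preliminary appeal to Proposition~\ref{thm:localcomp} to force all A-localized elements onto a common input is not logically needed --- the statement already fixes the inputs $a$ and $b$ --- but it is harmless and correctly explains why the hypothesis is phrased that way.
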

\begin{proof} 
    Denote $\mathcal P = \bigcup_{i=1}^k \mathcal P_i$
    and $\mathcal Q = \bigcup_{j=1}^l \mathcal Q_j$.

    Clearly there is a mutually disjoint partition
    $\set{\tilde{\mathcal P_s}}_{s=1}^K$ of $\mathcal P$,
    such that any $\mathcal P_i = \bigcup_{s\in I_i} \tilde{\mathcal P_s}$
    for some $I_i \subset \set{1\dots K}$.
    Similarly, there is a mutually disjoint partition
    $\set{\tilde{\mathcal Q_s}}_{s=1}^L$ of $\mathcal Q$
    such that any $\mathcal Q_j = \bigcup_{s\in J_j}\tilde{\mathcal Q_s}$
    for some $J_j \subset \set{1\dots L}$.
    Let us define:
    \begin{align*}
        p_{i} &= \bigoplus\set{[a \alpha, b \beta]\setdef 
            \alpha\in\tilde{\mathcal P}_{\tilde i},
            \beta \in \mathcal V_b\setminus \mathcal Q},\\
        q_{j} &= \bigoplus\set{[a \alpha, b \beta]\setdef
            \alpha\in\mathcal U_a\setminus \mathcal P,
            \beta\in\tilde{\mathcal Q}_{\tilde j}},\\
        r_{ij} &= \bigoplus\set{[a \alpha, b \beta]
            \setdef \alpha\in\tilde{\mathcal P}_{\tilde i},
            \beta\in\tilde{\mathcal Q}_{\tilde j}}.
    \end{align*}

    Clearly $p_{i}, q_{j}, r_{ij}$ 
    are mutually disjoint.
    Moreover
    \begin{align*}
        [a\in\mathcal P_i, \bbone] &= \bigoplus_{s\in I_i}
        \left(p_s\oplus \bigoplus_{j=1}^l
        \bigoplus_{t\in J_j} r_{st}\right),\\
        [\bbone, b\in\mathcal Q_j] &= \bigoplus_{t\in J_j}
        \left(q_{t}\oplus \bigoplus_{i=1}^k
        \bigoplus_{s \in I_i} r_{st}\right).
    \end{align*}
\end{proof}

The last Theorem is crucial for the interpretation of
$(\mathcal U, \mathcal V)$-box world system as a system composed
of two separate subsystems.
As was noted in the proof of Thm.~\ref{thm:states}
non-signaling condition is statement of merely existence
of certain elements of the logic.
It is far from obvious that this implies any form
of compatibility in general
(although we do not rule out such possibility). 
It is worth to mention here,
that Coecke points out that
in the context of process theory
the non-signaling condition is
also not the most adequate notion \cite{Coecke:2014aa}.

\begin{cor}
  The logic of a single box in an
  $(\mathcal U, \mathcal V)$-box world
  is a \emph{$0,\bbone$-pasting} of Boolean logics
  $\mathcal B_a = 2^{\mathcal U_a}$,
  i.e.\ the logic $L_1$ of left box
  is a disjoint union of $\set{B_a}_{a=1}^N$
  modulo by equivalence relation that
  identify $0$'s and $\bbone$'s of all $B_a$'s
  (cf. Figure~\ref{fig:01-pasting}).
  Consequently, the logic of single box is
  an orthomodular lattice.
\end{cor}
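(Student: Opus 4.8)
The plan is to identify the logic $L_1$ of the left box explicitly as a union of Boolean blocks and then read off orthomodularity. By definition the elements localized in A are the atoms $[a\alpha, \bbone]$ together with their valid $\oplus$-sums. The key observation is the disjointness criterion noted just before Lemma~\ref{thm:vee-wedge}: one has $[a\alpha, \bbone]\perp[a'\alpha', \bbone]$ exactly when $a = a'$ and $\alpha\neq\alpha'$. Since a $\oplus$-sum demands pairwise disjoint summands, every nontrivial localized-in-A element must collect atoms sharing a single input index $a$, hence is of the form $[a\in\mathcal P, \bbone]$ for some $\mathcal P\subset\mathcal U_a$. Writing $B_a\defeq\set{[a\in\mathcal P, \bbone]\setdef \mathcal P\subset\mathcal U_a}$, I would therefore conclude $L_1 = \bigcup_{a=1}^N B_a$, after checking that this union is closed under $\ocmpl$ and under disjoint suprema and so is a genuine sublogic.

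Next I would exhibit the pasting structure. The assignment $\mathcal P\mapsto[a\in\mathcal P, \bbone]$ is a bijection of $2^{\mathcal U_a}$ onto $B_a$ sending $\emptyset$ to $0$, $\mathcal U_a$ to $\bbone$, set complement to $\ocmpl$, and---by the $a = a'$ clause of Lemma~\ref{thm:vee-wedge}---union and intersection to $\vee$ and $\wedge$; thus each $B_a$ is a Boolean block isomorphic to $\mathcal B_a = 2^{\mathcal U_a}$. The $a\neq a'$ clause of the same lemma gives $p\wedge q = 0$ and $p\vee q = \bbone$ for $p\in B_a$, $q\in B_{a'}$ with $a\neq a'$; and since a nontrivial $[a\in\mathcal P, \bbone]$ depends on the coordinate $x_a$ alone it cannot also lie in a second block, so $B_a\cap B_{a'} = \set{0, \bbone}$ whenever $a\neq a'$. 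This is precisely the $0,\bbone$-pasting asserted in the statement.

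Finally I would verify that the pasting is an orthomodular lattice by a case split on whether two elements share a block. Every pair $p, q$ has a meet and a join in $L_1$: within a common $B_a$ these are the Boolean operations of $2^{\mathcal U_a}$, while across distinct blocks Lemma~\ref{thm:vee-wedge} forces $p\wedge q = 0$ and $p\vee q = \bbone$; hence $L_1$ is a lattice. For the orthomodular law L5 applied to an ordered pair $p\le q$, the same-block case is the automatic orthomodular identity of the Boolean algebra $2^{\mathcal U_a}$. I expect the cross-block case to be the only delicate point, and hence the main obstacle: one must check that no nontrivial order relation can connect two different blocks, which holds because $p\le q$ with $p, q$ nontrivial in distinct blocks would give $p = p\wedge q = 0$, a contradiction. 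Thus $p\le q$ across blocks forces $p = 0$ or $q = \bbone$, and in either degenerate case $q = p\vee(q\wedge p\ocmpl)$ reduces to a triviality. With L5 established in all cases, $L_1$ is an orthomodular lattice.
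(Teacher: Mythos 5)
Your proposal is correct and follows the same route as the paper's (much terser) proof: identify each block $B_a=\set{[a\in\mathcal P,\bbone]\setdef\mathcal P\subset\mathcal U_a}$ as a Boolean sublogic isomorphic to $2^{\mathcal U_a}$ and invoke Lemma~\ref{thm:vee-wedge} for the cross-block meets and joins. You simply spell out the details the paper leaves implicit, in particular the verification that no nontrivial order relation connects distinct blocks, which is the right point to check for orthomodularity.
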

\begin{proof}
  It is clear that
  $\mathcal B_a = \set{[a\in\mathcal P, \bbone]\setdef\mathcal P \subset \mathcal U_a}$
  is a sublogic of $\mathcal L$, which is a Boolean logic.
  Then the claim follows directly from the Lemma~\ref{thm:vee-wedge}.
\end{proof}

\begin{figure}[t]
  \centering
  \begin{tikzpicture}[node distance=0.7cm, auto,]
    \tikzset{
      bool/.style={rectangle, draw=black},
      empty/.style={rectangle, draw=none},
      arr/.style={->, >=latex', shorten >=1pt, shorten <= 1pt}
    };
    \node[bool] (b1) {$2^{\mathcal U_1}$};
    \node[bool, right=of b1] (b2) {$2^{\mathcal U_2}$};
    \node[empty, right=of b2] (dots) {$\dots$};
    \node[bool, right=of dots] (bn) {$2^{\mathcal U_N}$};
    \node[empty, below=of b1, xshift=5mm] (e1) {$\emptyset$};
    \node[empty, below=of b2] (e2) {$\emptyset$};
    \node[empty, below=of bn, xshift=-5mm] (en) {$\emptyset$};
    \node[empty, above=of b1, xshift=5mm] (o1) {$\bbone$};
    \node[empty, above=of b2] (o2) {$\bbone$};
    \node[empty, above=of bn, xshift=-5mm] (on) {$\bbone$};

    \draw[arr] (e1) -- (b1);
    \draw[arr] (e2) -- (b2);
    \draw[arr] (en) -- (bn);

    \draw[arr] (b1) -- (o1);
    \draw[arr] (b2) -- (o2);
    \draw[arr] (bn) -- (on);

    \node[draw, ellipse, dotted, fit=(e1) (e2) (en)] (eg) {};
    \node[draw, ellipse, dotted, fit=(o1) (o2) (on)] (og) {};
  \end{tikzpicture}
  \caption{The $0,\bbone$-pasting of Boolean algebras
    $\mathcal B_a = 2^{\mathcal U_a}$, i.e.\ disjoint sum of
    all Boolean blocks $\mathcal B_a$, with minimal and maximal elements
    identified.}
  \label{fig:01-pasting}
\end{figure}
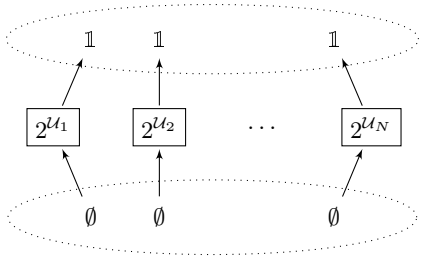

We would like to emphasize that this shows that the logic
of single boxes in box-world models is naively simple.
It consists of the Boolean logics glued together in the most ``free'' way. 
There are no relations imposed between observables defined by different inputs.

\section{Conclusions}

We constructed propositional system for 
a box world model consisting of two boxes
with any finite number of inputs and outputs.
Obtained structure justifies
probabilistic interpretation of box world models
in the sense of quantum logics.
Our construction extends in an obvious way
to many box models, although their properties
were not yet studied by us.

Quantum logic approach allows for a more
refined examination of the model that the convex
set approach (in the sense of \cite{barnum2007generalized}). 
In particular, we were able to discuss notion of compatibility,
Heisenberg uncertainty relations and we identified single box logics.

Our results allow to conclude that,
contrary to the common belief
(cf.~\cite{barnum2006cloning},\cite{barrett2007information})
box world theories cannot be considered as a more general
than quantum theory, even if one restricts to unphysical
finite dimensional Hilbert spaces.
The reason is three-fold:
\begin{enumerate}
\item logic of box-world model has always finite number of elements,
  while the logic of even the simplest quantum model (two level system)
  has an infinite number of elements,
\item logic of single boxes, being a $0,\bbone$-pasting of Boolean algebras,
  is structurally much simpler that the propositional systems of quantum
  mechanics,
\item the logic of box-world model is set-representable and consequently
  such models do not satisfy Heisenberg uncertainty relations.
\end{enumerate}
Although such models permit stronger correlations than possible
in quantum mechanics, we emphasize that this is only one
of many aspects of probability theory. 

\bibliographystyle{apsrev4-1}
\bibliography{concrete-ns}

\end{document}